\newtheorem{theorem}{Theorem}
\newcommand{\classNP}{{\sf NP}}
\newcommand{\RR}{\ensuremath{\mathbb{R}}}
\newcommand{\e}{{\mathrm e}}
\newcommand{\calL}{\ensuremath{\mathcal{L}}}
\newcommand{\calR}{\ensuremath{\mathcal{R}}}
\newcommand{\calC}{\ensuremath{\mathcal{C}}}
\newcommand{\calN}{\ensuremath{\mathcal{N}}}
\newcommand{\calA}{\ensuremath{\mathcal{A}}}
\newcommand{\calW}{\ensuremath{\mathcal{W}}}
\newcommand{\Ex}[1]{\mbox{\rm\bf E}\left[#1\right]}
\newcommand{\junk}[1]{}
\newcommand{\OPT}{\ensuremath{\mathrm{OPT}}}
\DeclareMathOperator*{\STAT}{STAT}
\DeclareMathOperator*{\FIXED}{FIXED}
\DeclareMathOperator*{\SECPRICE}{\mathit{SEC-PRICE}}
\newcommand{\growingmid}{\mathrel{}\middle|\mathrel{}}
\begin{document}

\title{Universally Truthful Secondary Spectrum Auctions}

\author{Martin Hoefer%
	\thanks{Max-Planck-Institut f\"ur Informatik and Saarland University, Saarbr\"ucken, Germany, {\tt 
	mhoefer@cs.rwth-aachen.de}. Supported by DFG through Cluster of Excellence 
	MMCI.}
	\and Thomas Kesselheim%
	\thanks{Dept.\ of Computer Science, Cornell University, Ithaca, NY, USA. {\tt kesselheim@cs.cornell.edu}. Supported by a fellowship within the Postdoc-Programme of the German Academic Exchange Service (DAAD) and by DFG through UMIC Research Center at RWTH Aachen University.} 
}
\date{} 

\maketitle

\begin{abstract}
We present algorithms for implementing local spectrum redistribution in wireless networks using a mechanism design approach. For example, in single-hop request scheduling, secondary users are modeled as rational agents that have private utility when getting assigned a channel for successful transmission. We present a rather simple algorithmic technique that allows to turn existing and future approximation algorithms and heuristics into truthful mechanisms for a large variety of networking problems. In contrast to previous work, our approach works for virtually all known interference models in the literature, including the physical model of interference based on SINR. It allows to address single-hop and multi-hop scheduling, routing, and even more general assignment and allocation problems. Our mechanisms are randomized and represent the first universally-truthful mechanisms for these problems with rigorous worst-case guarantees on the solution quality. 
In this way, our mechanisms can be used to obtain guaranteed solution quality even with risk-averse or risk-seeking bidders, for which existing approaches fail. 
\end{abstract}

\thispagestyle{empty}
\setcounter{page}{0}
\clearpage 

\section{Introduction}

A major challenge in the development of future wireless networking technology lies in spectrum scarcity, i.e., the limited availability of frequency bands for existing and new devices or services. To a large extent this problem results from the static allocation and licensing rules currently in use, where licenses for frequency bands are given to large service providers for entire countries. There is a major research effort underway in computer science and engineering to overcome this static allocation approach. The main idea is to motivate the licensed \emph{primary users} to open up their bands in local areas and enable \emph{secondary users} to use these local spectrum opportunities.

An attractive approach to implement secondary usage are \emph{auctions}. There is a rich theory in economics showing that auctions allow to implement global objectives in a market with rational participants and private information. Auction mechanisms allow to implement secondary spectrum usage as a market, in which primary users can sell access to their unused licensed spectrum bands to secondary users~\cite{Berry10}. 
%
In this scenario, secondary users are assumed to be selfish and rational. They can manipulate the outcome of a mechanism by misreporting their valuation and try to obtain a desired channel using smaller payments. Therefore, we desire a \emph{truthful} mechanism that computes allocation and payments in a way that no user has an incentive to lie. It ensures that every user maximizes his utility by reporting the true valuation. For decades, the most prominent technique to design truthful mechanisms was VCG~\cite{Vickrey61,Clarke71,Groves73}, which applies only if the chosen allocation optimizes social welfare exactly. Unfortunately, for non-trivial spectrum auctions social welfare maximization is \classNP-hard. The challenge is thus to find mechanisms that (1) are truthful, (2) run in polynomial time, and (3) provide allocations with near-optimal social welfare.

\paragraph{Our Contribution}
We introduce a unified framework to design simple polynomial-time mechanisms for a large variety of problems within secondary spectrum usage and prove non-trivial worst-case guarantees on their social welfare. We heavily extend the current literature on secondary spectrum auctions in several ways. 

First, many existing secondary spectrum auctions model interference as a binary property using, e.g., unit-disk graphs, interference boundaries, or are otherwise based on local binary conflicts~\cite{Zhou08,Zhou09,Gopinathan11,Zhu12}. These approaches lack many important features of realistic signal propagation. Instead, we here use realistic interference models based on the signal-to-interference-plus-noise ratio (SINR), for which allocation problems are often significantly more complicated to analyze. 

Second, while there exist approaches to secondary spectrum auctions with SINR~\cite{Hoefer12,HoeferKV11}, they are mostly unsuitable for practical use. Being polynomial-time algorithms in principle, their main drawback is the time-intensive use of randomized meta-rounding~\cite{LaviS11,Carr02} and the ellipsoid method for convex optimization. The meta-rounding technique relies on 
the ellipsoid method and iterative application of separation oracles. This approach is used by virtually all known secondary spectrum auctions with non-trivial performance guarantees, even the ones for binary interference models~\cite{Gopinathan11,Zhu12}. In contrast, our mechanisms are fast, surprisingly simple to grasp and implement, without complicated convex optimization techniques. Nevertheless, we prove non-trivial worst-case bounds on their allocation quality.

Third, the meta-rounding approach used in related work yields randomized mechanisms that are \emph{truthful in expectation}, i.e., if users care only about the \emph{expectation} of utility, they have no incentive to lie. Truthfulness in expectation is a strong assumption, because users must be risk-neutral. Such mechanisms lose their truthfulness if users are, e.g., risk-averse or risk-seeking or take further parameters of the utility distribution into account. In contrast, our mechanisms are \emph{universally truthful}, randomization has no effect on incentives. Our mechanisms can be thought of as probability distributions over deterministic truthful mechanisms. We use randomization only to guarantee good social welfare.

Fourth, while we present our approach for secondary markets with single-hop transmissions in the SINR model, it turns out to be much more flexible. In fact, it is applicable to a large variety of mechanism design problems in (wireless) networks. The main criterion is that feasibility in the underlying allocation problem is downward closed. This holds, e.g., for all allocation problems that can be modeled via packing linear or integer programs. We highlight this extension of our approach for multi-hop routing problems with binary~\cite{Zhu12} or even SINR interference.


The general idea of our approach is as follows. Similar to~\cite{Dobzinski12}, we first determine a ``representative'' valuation of the users by independently at random moving each user into a statistics group with small probability. The users in this group are asked their valuation and discarded without channel and payments. Based on their reports, a representative valuation is determined, from which we obtain a random take-it-or-leave-it price $p$. Each surviving user is asked if he would want to buy a channel at price $p$. If not, the user leaves the auction, otherwise he survives. For the set of surviving users, we then determine an allocation using an arbitrary polynomial-time approximation algorithm for the channel allocation problem. Finally, each user that ends up with a channel pays the price $p$, all others pay nothing. While truthfulness of this scheme can be shown rather directly, the challenge is proving that this general framework allows to obtain non-trivial worst-case guarantees on social welfare and revenue.

\paragraph{Related Work}
In recent years, there have been numerous studies of different flavors of the problem of maximizing the number of successful transmissions in a wireless network with interference. Early works model this problem essentially as a maximum independent set problem in graphs~\cite{ErlebachJS05, Nieberg08}. However, as physical conditions are only poorly captured this way, starting with a paper by Moscibroda and Wattenhofer~\cite{MoscibrodaW06} attention has shifted to models based on SINR. Many approximation algorithms use fixed power assignments, e.g., uniform ones~\cite{AndrewsD09, HalldorssonW09}, or, more generally, monotone ones~\cite{HalldorssonM11,HalldorssonHMW13}. Better results in terms of the approximation factor can, however, be achieved by making also the power assignment itself subject to optimization~\cite{Kesselheim11,Kesselheim12}. To illustrate our techniques, we use the latter algorithm, which is the best algorithm to solve the combined selection and power-control problem so far. However, we also show that our approach is much more general, allowing it to be used with any of the algorithms mentioned above as well as any future algorithm.

Designing auction mechanisms for secondary spectrum markets has attracted increased research interest in recent years. Usually, socially optimal channel assignment poses \classNP-hard graph coloring and maximum independent set problems. In these cases, the classic approach of VCG for designing truthful mechanisms~\cite{Vickrey61,Clarke71,Groves73} cannot be applied efficiently. Most approaches to spectrum auctions are single-parameter problems, where each user has a single numerical value for getting any one of a set of channels. They are often based on the classic monotonicity characterization of truthful mechanisms~\cite{Myerson81}. Zhou et al~\cite{Zhou08} studied monotone deterministic algorithms for a secondary spectrum auction in a graph-based interference model. This was extended by Zhou and Zheng~\cite{Zhou09} to a double auction framework. More recently, Gopinathan et al~\cite{Gopinathan11} studied truthful-in-expectation mechanisms via randomized meta-rounding that allow to bound the worst-case approximation factor and allow to incorporate additional objectives like fairness. In a related work, Gopinathan et al~\cite{GopinathanRev11} also studied revenue optimization using concepts introduced for revenue-optimizing auctions of digital goods. Very recently, Zhu et al~\cite{Zhu12} proposed deterministic monotone mechanisms for a single-parameter multi-hop routing model. They also provided truthful-in-expectation mechanisms based on randomized meta-rounding with provable worst-case performance.

All the above mentioned works address single-parameter domains. The general case, in which each user can have a different valuation for each subset of acquired channels, was studied in our work~\cite{HoeferKV11} for single-hop scheduling in a general framework encompassing graph-based and SINR interference models. Very recently, we provided improved results in the same framework for the popular subclasses of symmetric and submodular user valuations~\cite{Hoefer12}. In both papers, we proposed truthful-in-expectation mechanisms based on randomized meta-rounding with provable worst-case guarantees. Our guarantees depend on the inductive independence number~\cite{AkcogluADK02}, a non-standard graph parameter that can capture feasibility in virtually all known interference models.

\paragraph{Outline}
The remainder of this paper is structured as follows. We define the initial scenario and mechanism design basics in Section~\ref{sec:prelim}. In Section~\ref{sec:SINR} we describe our mechanism for single-hop request scheduling in the SINR model. In Section~\ref{sec:general}, we then show how to generalize our technique to many other problems, including single- and multi-hop scheduling in the SINR and other interference models. Finally, we conclude in Section~\ref{sec:conclude}.

\section{Preliminaries}
\label{sec:prelim}

\paragraph{Network Model}

We will first explain our approach in the context of single-hop wireless transmission scheduling and then generalize our technique later on. We assume there is a primary user that holds a set $\calC$ of $k$ channels in a local area. There is a set $\calN$ of $n$ secondary users that strive to obtain a channel. Each user corresponds to a communication request or \emph{link} between points in a metric space. Link $i$ consists of sender $s_i$ and receiver $r_i$. 

For each channel, the primary user assigns licenses to allow a subset $\calL \subseteq \calN$ of users to use the channel. If link $i$ gets a license, sender $s_i$ transmits on the channel a signal with power $\sigma_i$. We will consider different power assignments below. The signal strength decays exponentially over distance with exponent $\alpha > 1$. Given the subset $\calL \subseteq \calN$ of links and a power assignment $\sigma : \calL \to \RR_{> 0}$, the transmission on link $i \in \calL$ is successful if the signal-to-interference-plus-noise ratio (SINR) $\gamma_i(\calL,p)$ is above some predefined threshold $\beta > 0$:
\begin{equation}
\label{eq:SINR}
\gamma_i(\calL,\sigma) \quad = \quad \frac{ \nicefrac{\sigma_i}{d_{ii}^\alpha}}{N + \displaystyle \sum_{\substack{j \in \calL, j\neq i}} \nicefrac{\sigma_j}{d_{ji}^\alpha}} \quad \ge \quad \beta \enspace,
\end{equation}
where $d_{ji}$ is the distance between sender $s_j$ and receiver $r_i$ and $N$ is ambient noise. This condition captures the intuition that at $r_i$ the decayed signal of $s_i$ is significantly louder than interference from decayed signals of other senders and noise.

Each user $i$ has a benefit $w_i$ for being able to transmit successfully on a channel. In particular, its benefit is $w_i > 0$ if and only if he gets a license for at least one channel on which condition~\eqref{eq:SINR} is fulfilled. Otherwise, its benefit is 0. The goal of the primary user is to compute an allocation $\calA = (\calL_1,\ldots,\calL_k)$ of secondary users in $\calN$ to channels to maximize \emph{social welfare}, i.e., the sum of benefits of successful users.

\paragraph{Mechanism Design}

Benefits are private information of secondary users. Before the allocation each secondary user $i$ must report its benefit $w_i$ for a successful transmission; we say user $i$ makes \emph{bid} $b_i$. Users are rational and selfish, they would like to obtain a channel at the lowest possible cost. Depending on the channel allocation algorithm, user $i$ might benefit from misreporting its value and have an incentive to lie $b_i \neq w_i$. 

This fundamental problem has been studied for several decades in the area of \emph{mechanism design}. To avoid manipulation and set incentives for truthful bids, we design a \emph{truthful mechanism}. It collects the vector of bids $b = (b_i)_{i \in \calN}$, allocates users to channels, and charges payments $p_i(b)$ to user $i \in \calN$. User $i \in \calN$ has a quasi-linear \emph{utility}
\[ u_i(b) = \begin{cases} w_i - p_i(b) & \text{\eqref{eq:SINR} holds on at least one channel assigned to $i$}\\
0 & \text{otherwise\enspace.}
\end{cases}
\]
A deterministic mechanism is \emph{truthful} if no user profits from lying. More formally, we require that
\begin{equation}
\label{eq:utility}
u_i(b_i,b_{-i}) \le u_i(w_i,b_{-i}) \hspace{0.5cm} \text{for all $b_i \ge 0$ and $b_{-i}$\enspace,}
\end{equation}
where we use $b_{-i}$ to denote the vector $b$ of bids excluding bid $b_i$. The classic technique to compose deterministic truthful mechanisms is VCG~\cite{Vickrey61,Clarke71,Groves73}, but it applies only when the allocation maximizes social welfare exactly. Maximizing social welfare in the channel allocation problem is \classNP-hard even for special cases~\cite{AndrewsD09}. We have to find other ways to obtain truthful mechanisms with good social welfare guarantees that run in polynomial time.

Towards this end, we study randomized mechanisms, which are \emph{truthful in expectation} if \eqref{eq:utility} holds for the \emph{expected} utility of every user. Truthfulness in expectation has the drawback that the outcome of random coins must be invisible to the users, and except for the expectation every user must be indifferent to all other parameters of its utility distribution. 

A much stronger condition is fulfilled if a randomized mechanism is \emph{universally truthful}. We can interpret a universally truthful mechanism as having a number of deterministic truthful mechanisms and initially making one randomized decision which one of the deterministic mechanisms to apply. Here truthfulness is independent of randomization, so the mechanism could make the random decision in public before collecting bids. The reason we use randomization is to obtain a better approximation of social welfare. 

\section{Universally Truthful Auctions for Single-Hop SINR Scheduling}
\label{sec:SINR}
The general structure of our mechanism is presented in Algorithm~\ref{alg:universalframework}. It is inspired by a universally truthful mechanism for combinatorial auctions by Dobzinski et al.~\cite{Dobzinski12}. At the beginning, the mechanism decides at random whether the actual allocation algorithm is executed at all. The purpose of this step is to address the case when there is a super-dominant bidder whose bid is much higher than the sum of valuations of all other bidders. To treat this case, the mechanism runs a second-price auction in which only a single bidder is served with a small, constant probability $\varepsilon$. In this case, only the bidder having the highest bid is allocated a channel and is charged the second highest bid.

With probability $1 - \varepsilon$, the actual allocation algorithm is performed. We consider prior-free mechanisms without any knowledge on the bidders' valuations. Therefore, we sample a number of bidders that are asked for their valuation and discarded afterwards. If the auction mechanism is applied repeatedly in practice, knowledge about typical bidder valuations might become available due to experience, which allows to skip this sampling step. Having obtained knowledge on ``typical'' valuations, we determine a price in a random experiment. The price $p$ is obtained by choosing $X$ uniformly at random from $\{0, 1, \ldots, \lceil \log n \rceil + 1\}$ and setting $p := 2^{-X} \cdot B$, where $B$ is the largest bid that was observed in the sampling phase. Afterwards, we only consider bidders that are willing to pay at least $p$. For these bidders, we compute an allocation, in which each bidder is either selected and assigned to a channel in $\calC$ or it is discarded (see below). In the end, each bidder that is selected and assigned to a channel is charged $p$. The discarded bidders are not charged.

\begin{algorithm}[t]
set $\SECPRICE$ to $1$ with probability $\varepsilon$, otherwise to $0$\;
\eIf{$\SECPRICE = 1$}{
Let $i^\ast$ be a bidder such that $b_{i^\ast} = \max_{i \in \calN} b_i$\;
Pick an arbitrary channel in $\calC$ and allocate it to $i^\ast$\;
Charge $p_{i^\ast}(b) := \max_{i \in \calN \setminus \{ i^\ast\} } b_i$\;
All other bidders $i$ are not allocated and charged $p_i(b) := 0$\;
}{
\ForAll{$i \in \calN$}{add $i$ to $\STAT$ with probability $\varepsilon$, otherwise to $\FIXED$}
Set $B := \max_{i \in \STAT} b_i$\;
choose $X$ uniformly at random from $\{0, 1, \ldots, \lceil \log n \rceil + 1\}$\;
Set $p := 2^{-X} \cdot B$\;
Set $M := \{ i \in \FIXED \mid b_i \geq p\}$\;
Run \texttt{Unweighted\-Packing}$(M, \calC)$ which returns allocation $(\calL_1,\ldots,\calL_k)$\;
If $i \in \bigcup_{j=1}^k \calL_j$, charge $p_i(b) := p$, otherwise $p_i(b) := 0$\;
}
\caption{Universally-truthful framework}
\label{alg:universalframework}
\end{algorithm}

The selection of bidders and allocation to channels is delegated to a procedure called \texttt{Unweighted\-Packing}$(M, \calC)$. Given a set $M \subseteq \calN$ of bidders, it computes a partial allocation of bidders in $M$ to channels in $\calC$ that is feasible with respect to the SINR constraints. Note that the routine does not get to know the valuations of bidders, instead it aims at maximizing only the \emph{number} of selected bidders. Their actual valuations are neglected at this stage. We here assume that the application scenario allows power control, which allows us to employ Algorithm~\ref{alg:packingwithpc} originally presented in~\cite{Kesselheim11}. It is known that this algorithm always computes feasible allocations. In particular, the desired power assignment fulfilling the SINR constraints of all selected links in Line~\ref{line:power} exists. Furthermore, it is a constant-factor approximation algorithm for the optimization problem of maximizing the number of feasible requests on $k$ channels (for a detailed treatment see~\cite{Kesselheim11}). In our case, this implies that for a given set $M$, we have the following property: Let $S = \bigcup_{j=1}^k \calL_j$ be the set of bidders selected on any channel after executing \texttt{Unweighted\-Packing}$(M, \calC)$. Let furthermore $F \subseteq M$ be any arbitrary set of bidders such that there is a feasible allocation of all bidders in $F$ to the channels in $\calC$. Then we have $\lvert S \rvert \geq \kappa \cdot \lvert F \rvert$ for some constant $\kappa \leq 1$.

If the application scenario does not allow power control, we can alternatively implement {\tt Unweigh\-ted\-Packing}$(M,\calC)$ by adapting algorithms for fixed power assignments. 
We discuss this issue as part of the general treatment in the next section. 

\begin{algorithm}[b]
initialize $\calL_1', \ldots, \calL_k' = \emptyset$ \;
\For{$\ell' \in \calR$ in order of increasing length}{
\If{there is some $j \in \calC$ such that $\sum_{\substack{(s, r) \in \calL_j \\ d(s, r) < d(s', r')}} \frac{d(s, r)^\alpha}{d(s, r')^\alpha} + \frac{d(s, r)^\alpha}{d(s', r)^\alpha} \leq \frac{1}{2 \cdot 3^\alpha \cdot \left(4 \beta + 2\right)}$}{
add $\ell'$ to $\calL_j$\;
}
}
compute power assignment $\sigma$ such that all SINR constraints are fulfilled\label{line:power}\;
return $(\calL_1, \ldots, \calL_k)$\;
\caption{Unweighted\-PackingPC$(M, \calC)$ \cite{Kesselheim11}}
\label{alg:packingwithpc}
\end{algorithm}
 
We will first show that Algorithm~\ref{alg:universalframework} is a universally truthful mechanism. Given any fixed outcome of the random experiments, no bidder can profit from misreporting the valuation.

\begin{theorem}
\label{theorem:truthfulsinr}
Algorithm~\ref{alg:universalframework} is universally truthful.
\end{theorem}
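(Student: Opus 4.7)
The plan is to condition on the full outcome of the mechanism's random choices---namely the value of $\SECPRICE$, the partition of $\calN$ into $\STAT$ and $\FIXED$, the draw of $X$, and any internal randomness used by \texttt{UnweightedPacking}---and to argue that for every such fixed outcome, truthful bidding is a weakly dominant strategy for each bidder $i$. This reduces universal truthfulness to a short case analysis driven by the $\SECPRICE$ coin and by whether $i$ lies in $\STAT$ or $\FIXED$.

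If $\SECPRICE = 1$, Algorithm~\ref{alg:universalframework} implements the classical Vickrey second-price auction on a single copy of the good: the highest bidder wins an arbitrary channel and pays the second-highest bid. Truthfulness then follows from the standard Vickrey argument. If $\SECPRICE = 0$ and $i \in \STAT$, the mechanism never allocates a channel to $i$ and never charges $i$, so $u_i \equiv 0$ independently of $b_i$; truthfulness is vacuous (and, notably, $i$'s bid influences $B$ but this only affects \emph{other} bidders, not $i$'s own utility).

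The main case is $\SECPRICE = 0$ with $i \in \FIXED$. The key structural observation I would emphasize is that the posted price $p$ is a deterministic function of the $\STAT$ bids and of $X$, hence independent of $b_i$; moreover, \texttt{UnweightedPacking}$(M,\calC)$ receives only the set $M$---not the numerical bid values of its members---so conditional on the fixed randomness, whether $i$ ends up in the output allocation is determined purely by whether $i \in M$, i.e., by whether $b_i \geq p$. The mechanism therefore reduces, from $i$'s perspective, to a take-it-or-leave-it offer at price $p$. I would then verify the two standard cases: if $w_i \geq p$, bidding $b_i = w_i$ places $i$ in $M$ and yields utility $0$ or $w_i - p \geq 0$, while any bid below $p$ yields utility $0$; if $w_i < p$, truthful bidding keeps $i$ out of $M$ and yields utility $0$, while any bid at least $p$ risks being allocated at cost $p > w_i$, giving nonpositive utility. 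In both sub-cases, truthful bidding is weakly optimal.

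The only substantive obstacle is to isolate and state the two invariants that drive the argument---that $p$ depends only on the $\STAT$ bids and $X$, and that \texttt{UnweightedPacking} discards all numerical bid information once the threshold set $M$ has been formed. Once these are made explicit, the remainder is a routine posted-price plus Vickrey case analysis, and no probability calculation is required because universal truthfulness is established outcome by outcome.
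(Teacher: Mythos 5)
Your proposal is correct and follows essentially the same route as the paper's proof: fix all random choices, handle $\SECPRICE=1$ via the Vickrey argument, note that bidders in $\STAT$ receive nothing, and reduce the $\FIXED$ case to a take-it-or-leave-it offer at a price $p$ that bidder $i$ cannot influence, with \texttt{UnweightedPacking} seeing only the set $M$ and not the bid values. The only cosmetic difference is that you phrase the final case analysis in terms of $w_i$ versus $p$ rather than comparing the true and false bids directly, which is equivalent.
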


\begin{proof}
We have to show that the mechanism is a distribution over (deterministic) truthful mechanisms. Let us now fix the outcome of $\SECPRICE$. In case of $\SECPRICE = 0$, fix furthermore all random choices leading to the division into $\STAT$ and $\FIXED$, and also the random choice of $X$ leading to $p$.

Now, consider a single bidder $i \in \calN$. We have to show that she cannot gain by misreporting the valuation. For this purpose, we distinguish between the outcomes of the randomization.

\textbf{Case 1: \boldmath$\SECPRICE = 1$} In this case, we perform the classic deterministic second-price Vickrey auction for a single item~\cite{Vickrey61} that is known to be monotone~\cite{Myerson81} and truthful.

\textbf{Case 2: \boldmath$\SECPRICE = 0$ and $i \in \STAT$} Bidder $i$ will not be allocated anything, independent of the bid. Thus, she has no incentive to misreport.

\textbf{Case 3: \boldmath$\SECPRICE = 0$ and $i \in \FIXED$} The bid $b_i$ only determines whether or not $i$ is contained in set $M$. The price $p$ cannot be influenced by $i$, because it is determined only by the bidders in $\STAT$. Now, let us assume that $i$ reports a false bid $b_i'$, which results in a different allocation $S'$.

The two cases that either $b_i, b_i' \ge p$ or $b_i,b'_i \le p$ are irrelevant as in both cases $M$, the allocation and the payment are the same for $b_i$ and $b'_i$. Hence, it remains to consider the cases $b_i < p \leq b_i'$ and $b_i' < p \leq b_i$. In the first case, $i$ can be contained in $S'$ but not in $S$. If $i \not\in S'$, she has zero utility 0 for both bids. If $i \in S'$, then bidder $i$ has to pay $p$, which is more than her valuation. Hence, her quasi-linear utility is negative as opposed to zero utility for bid $b_i$. In the case $b_i' < p \leq b_i$, we have in turn that $i \in S$ but $i \not\in S'$. In case of bidding $b_i$, the utility is $b_i - p \geq 0$, whereas in case of bidding $b_i'$ it is $0$. In any case, the utility is maximized by bidding $b_i$.
\end{proof}

This shows that our algorithm is a truthful mechanism independent of random choices. We now proceed to the proof that the allocation returned by the mechanism provides a non-trivial worst-case approximation guarantee. Our proof proceeds in two cases. If the bid of the highest bidder is very large compared to the optimal social welfare, then we resort to the single-item Vickrey auction executed with some probability that gives him a channel. In this way, our algorithm secures the high bid with significant probability. Otherwise, we bound the social welfare of our allocation using the total revenue generated through the prices $p$ paid by the allocated bidders. As no bidder gets negative utility, the revenue is a lower bound for the welfare of the allocation obtained by the algorithm. 

\begin{theorem}
\label{theorem:sinrApx}
Let $\calA$ be the allocation returned by the mechanism and $\calA^\ast$ be the allocation optimizing social welfare. Then we have
\[
\Ex{b(\calA)} = \Omega\left( \frac{\varepsilon}{\log n}\right) \cdot b(\calA^\ast) \enspace.
\]
\end{theorem}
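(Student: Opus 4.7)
The plan is to separately track the two random branches of the mechanism and combine their contributions. Since every winner $i \in S := \bigcup_{j=1}^k \calL_j$ satisfies $w_i \geq p$ (otherwise $i \notin M$), the revenue $p \cdot |S|$ is a valid lower bound on the welfare $b(\calA)$ returned by the allocation branch. Let $w^\ast := \max_i w_i$, let $i^\ast$ be a bidder realizing $w^\ast$, let $F^\ast$ denote the set of bidders allocated by $\calA^\ast$, and write $\OPT := b(\calA^\ast)$.

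I would split on the size of $w^\ast$. \textbf{Case A:} $w^\ast \geq \OPT/16$. With probability $\varepsilon$ the mechanism runs the single-item second-price auction, which allocates a channel to $i^\ast$ and thus yields welfare $w^\ast$. Hence $\Ex{b(\calA)} \geq \varepsilon w^\ast = \Omega(\varepsilon)\cdot \OPT$, which already dominates the claimed bound.

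\textbf{Case B:} $w^\ast < \OPT/16$. Here I rely on the allocation branch. Put $L := \lceil \log n \rceil + 1$, $p_X := w^\ast \cdot 2^{-X}$ for $X \in \{0, \ldots, L\}$, and $F_X^\ast := \{i \in F^\ast \mid w_i \geq p_X\}$. Condition on the event $I := \{\SECPRICE = 0\} \cap \{i^\ast \in \STAT\}$, which has probability $(1-\varepsilon)\varepsilon$. Under $I$ we have $B = w^\ast$, so $p = p_X$, and each bidder distinct from $i^\ast$ independently lies in $\FIXED$ with probability $1-\varepsilon$. The set $F_X^\ast \cap \FIXED$ is contained in $M$ (every member has $w_i \geq p$) and is feasible (by downward closure of feasibility from $F^\ast$). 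The approximation guarantee of \texttt{UnweightedPacking} therefore gives
\[
\Ex{p \cdot |S| \growingmid I,\, X} \;\geq\; \kappa\, p_X \cdot \Ex{|F_X^\ast \cap \FIXED| \growingmid I} \;=\; \kappa(1-\varepsilon)\, p_X (|F_X^\ast| - 1),
\]
using $i^\ast \in F_X^\ast$ for every $X$ together with $i^\ast \notin \FIXED$ under $I$. Averaging over the uniform choice of $X$,
\[
\Ex{p\cdot|S| \growingmid I} \;\geq\; \frac{\kappa(1-\varepsilon)}{L+1}\Bigl(\sum_{X=0}^{L} p_X |F_X^\ast| \;-\; \sum_{X=0}^{L} p_X\Bigr).
\]

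The key technical step is to show $\sum_{X=0}^{L} p_X |F_X^\ast| = \Omega(\OPT)$. Swapping the sums, the contribution of each $i \in F^\ast$ with $w_i \geq w^\ast/2^L$ is the geometric series $\sum_{X \geq \lceil \log_2(w^\ast/w_i)\rceil} p_X$, whose leading term lies in $(w_i/2,\, w_i]$ and therefore totals at least $w_i/2$. The remaining bidders of $F^\ast$ each have $w_i < w^\ast/2^L \leq w^\ast/(2n)$, so their collective contribution to $\OPT$ is at most $w^\ast/2$. Thus $\sum_X p_X |F_X^\ast| \geq (\OPT - w^\ast/2)/2$. Since $\sum_X p_X < 2 w^\ast$ and $w^\ast < \OPT/16$ in Case B, the bracket above is $\Omega(\OPT)$, giving
\[
\Ex{b(\calA)} \;\geq\; \Pr{I}\cdot \Ex{p\cdot|S| \growingmid I} \;=\; \Omega\!\left(\frac{\varepsilon}{\log n}\right)\cdot \OPT.
\]

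The main obstacle is precisely this averaging argument. A naive pigeonhole that picks the single best price level loses an extra $1/\log n$ factor (since one must also condition on $X$ equalling that level), yielding only $\Omega(\varepsilon/\log^2 n)$. Summing over $X$ rather than pigeonholing, and verifying that the correction $\sum_X p_X = O(w^\ast)$ is negligible in Case B, is what preserves the $1/\log n$ dependence claimed in the theorem.
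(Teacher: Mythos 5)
Your proof is correct and follows essentially the same route as the paper's: the same case split on whether the top bid is a constant fraction of the optimum, the same conditioning on $\SECPRICE = 0$ and $i^\ast \in \STAT$ to fix $B = B^\ast$, and the same lower bound on expected revenue obtained by averaging $p_X\lvert S_X\rvert$ over the price levels and comparing against $\lvert F_X^\ast \cap \FIXED\rvert$ via the packing guarantee. Your explicit sum-swap establishing $\sum_X p_X \lvert F_X^\ast\rvert \geq (\OPT - B^\ast/2)/2$ merely spells out a step the paper leaves implicit, and the constant $16$ versus $8$ in the case split is a cosmetic difference.
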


\begin{proof}
We denote the highest bid by $B^\ast = \max_{i \in \calN} b_i$. Note that the channels are important for feasibility of the solution but not for the social welfare. In particular, a bidder is indifferent about the channel he obtains. Hence, we will focus on $S$ to denote simply \emph{the set of bidders allocated to at least one channel} by the algorithm. Similarly, we use $S^\ast$ to denote the set of bidders receiving at least one channel in the optimum. 

\textbf{Case 1: $\boldsymbol{b(\calA^\ast) < 8 \cdot B^\ast}$} To bound the social welfare in this case, we only consider the event that $\SECPRICE = 1$. This happens with probability $\varepsilon$. We know that, provided that $\SECPRICE = 1$, the social welfare of the computed solution is precisely $B^\ast$. That is, the expected social welfare is at least 
\[
\Ex{b(\calA)} \ge \varepsilon \cdot B^\ast \geq \frac{\varepsilon}{8} \cdot b(\calA^\ast) \enspace.
\]
This completes the proof for the first case.

\textbf{Case 2: $\boldsymbol{b(\calA^\ast) \geq 8 \cdot B^\ast}$} Let $i^\ast$ be a bidder with the highest bid $b_{i^\ast} = B^\ast$. To bound the expected revenue, we only consider cases, in which $\SECPRICE = 0$ and $i^\ast \in \STAT$.

Under the assumption that $i^\ast \in \STAT$, we denote by $S_j$ the set of bidders getting at least one channel when $X = j$. In particular, $S_j$ is the set of bidders allocated by the packing algorithm when executed on the set $M_j = \{ i \in \FIXED \mid b_i \geq 2^{-j} \cdot B^\ast\}$. Using $p = 2^{-j} \cdot B^\ast$, the expected revenue of the mechanism can be bounded by
\[
\frac{1}{\lceil \log n \rceil + 2} \cdot \Ex{\sum_{j=0}^{\lceil \log n \rceil + 1} 2^{-j} \cdot B^\ast \cdot \lvert S_j \rvert \growingmid i^\ast \in \STAT} \enspace.
\]
By linearity of expectation, this is equal to
\begin{equation}
\frac{1}{\lceil \log n \rceil + 2} \sum_{j=0}^{\lceil \log n \rceil + 1} 2^{-j} \cdot B^\ast \cdot \Ex{\lvert S_j \rvert \growingmid i^\ast \in \STAT} \enspace.
\end{equation}

Furthermore, let us define $S_j^{\ast} = \{ i \in S^{\ast} \mid b_i \geq 2^{-j} \cdot B^\ast \}$ as the set of bidders receiving a channel in the optimum and having a bid at least $2^{-j} \cdot B^\ast$. Obviously, $S_j^{\ast} \cap \FIXED$ is a subset of $M_j$, and it can be turned into a feasible allocation because it allocates only a subset of bidders from the optimum. Hence, using our assumption on the quality of algorithm {\tt Unweighted\-Packing}, we have $\lvert S_j \rvert \geq \kappa \cdot \lvert S_j^{\ast} \cap \FIXED \rvert$ for a suitable constant $\kappa$. Furthermore, we can bound the expected size of $S_j^\ast \cap \FIXED$ by
\begin{align*}
& \Ex{\lvert S_j^\ast \cap \FIXED \rvert \growingmid i^\ast \in \STAT} \quad = \quad \Ex{\lvert S_j^\ast \cap \FIXED \setminus \{i^\ast\} \rvert} \\
= \quad & (1 - \varepsilon) \lvert S_j^\ast \setminus \{i^\ast\} \rvert 
 \quad \geq \quad (1 - \varepsilon)(\lvert S_j^\ast \rvert - 1) \enspace.
\end{align*}

The value of this is at least
\begin{align*}
& \frac{1}{\lceil \log n \rceil + 2} \sum_{j=0}^{\lceil \log n \rceil + 1} 2^{-j} B^\ast \kappa \cdot (1 - \varepsilon)(\lvert S_j^\ast \rvert - 1)
\; \geq \; \frac{(1 - \varepsilon) \kappa}{\lceil \log n \rceil + 2} \left( \left( \sum_{j=0}^{\lceil \log n \rceil + 1} 2^{-j} B^\ast \cdot \lvert S_j^\ast \rvert \right) - 2 B^\ast \right).
\end{align*}

To derive the approximation factor now, we have to bound $b(\calA^\ast)$ in terms of the cardinalities of the $S_j^\ast$ sets. As a first step, we consider links that are not contained in any of these sets. For this purpose, let us define $Y = \calN \setminus S_{\lceil \log n \rceil + 1}$. The contribution of these links to the social welfare is at most
\[
\sum_{i \in Y} b_i \leq n \cdot 2^{-\lceil \log n \rceil -1} \cdot B^\ast \leq \frac{b(\calA^\ast)}{2} \enspace,
\]
because $b(\calA^\ast) \geq B^\ast$. This yields for the remaining links
\[
\sum_{j = 0}^{\lceil \log n \rceil + 1} 2^{-j} \cdot B^\ast \cdot \lvert S_j \rvert \geq \frac{b(\calA^\ast)}{4} \enspace.
\]
Combining this bound with the previously obtained bound on revenue, we get that the expected revenue is at least
\[
\frac{(1 - \varepsilon) \kappa}{\lceil \log n \rceil + 2} \left(\frac{b(\calA^\ast)}{4} - 2 B^\ast \right) \quad \geq \quad \frac{(1 - \varepsilon) \kappa}{\lceil \log n \rceil + 2} \cdot \frac{b(\calA^\ast)}{8} \enspace.
\]
For small $\varepsilon$, this shows that the revenue is in $\Omega\left(\frac{\varepsilon \cdot \kappa}{\log n}\right) \cdot b(\calA^\ast)$. As no bidder has negative utility, this shows the same bound for the social welfare. With $\kappa$ being constant, we obtain the claim of the theorem in Case 2 with $\SECPRICE = 0$ and $i^\ast \in \STAT$.

Finally, $\SECPRICE = 0$ and $i^\ast \in \STAT$ happens with probability $(1-\varepsilon)\cdot\varepsilon$.
In all other cases, we underestimate the welfare of the solution computed by the mechanism by 0. For the expected social welfare obtained by our algorithm this yields
\[
\Ex{b(\calA)} \ge \frac{(1-\varepsilon)^2 \cdot \varepsilon \cdot \kappa}{(\lceil \log n \rceil + 2)\cdot 8} \cdot b(\calA^\ast) \enspace.
\]
This proves the claim for Case 2 and yields the theorem.
\end{proof}


\section{General Technique for Universally Truthful Network Auctions}
\label{sec:general}
In this section, we demonstrate that our approach from the previous section allows to build universally truthful mechanisms for virtually all secondary spectrum auctions. To illustrate our results, we will focus on secondary  network auctions studied Zhu et al.~\cite{Zhu12}. In this model, \emph{secondary networks} are selfish users participating in an auction. Each user has a network $G_i$ with a dedicated source node $s_i$ and destination node $d_i$ that are publicly known. User $i$ wants to connect $s_i$ to $d_i$ without interference, and any interference-free $s_i$-$d_i$-path has value $w_i$ for her. The valuation $w_i$ is private information. The allocation problem of the mechanism is to select a subset $S \subseteq \calN$ of the users. For each selected user $i \in S$, it has to allocate a path of links connecting $s_i$ to $d_i$ in $G_i$. For each used link in any of the $G_i$ graphs, it furthermore has to select a channel from $\calC$. The selection of paths and channels has to be free of interference. The interference constraint is modeled using an additional conflict graph $H$ as follows. The vertex set of $H$ consists of all links of the $G_i$-graphs. The edges of $H$ indicate which links in the $G_i$-graphs collide due to interference. An allocation of channels to links and paths to bidders is only feasible if none of the used links that are on the same channel are connected by an edge in $H$.

The overall approach of the mechanism is as before. The users here are again single-parameter agents -- their private information is the single numerical value $w_i$ representing the value for an interference-free $s_i$-$d_i$-path in $G_i$. Zhu et al~\cite{Zhu12} present a deterministic heuristic for the allocation problem that satisfies Myerson's monotonicity condition~\cite{Myerson81} and can therefore be turned into a truthful mechanism. Furthermore, they discuss a truthful-in-expectation mechanism based on randomized meta-rounding with a provable approximation guarantee. The guarantee depends on parameters like the maximum length of an allocated path and structural properties of the conflict graph $H$.

There are further fundamental similarities with our single-hop scheduling approach. This scenario also exhibits a \emph{packing structure}. Each bidder $i$ either obtains some interference-free path, for which her valuation is $w_i$, or she is not served at all. In this sense, an allocation determines a set $W$ of winners that become successful in the solution, we will say they are assigned a ``license'' for successful transmission. The social welfare of the allocation is given by $\sum_{i \in W} w_i$. By packing structure we refer to the property that for any feasible allocation, we can remove any winner $i' \in W$ without changing the feasibility of the allocation for any other bidder. In both, secondary network and single-hop scheduling scenarios, this is due to the fact that by removing bidders, interference for the other bidders can only decrease. In the secondary network setting, removing bidders and their paths will never introduce new conflicts. They same is true for request scheduling, because the SINR can only increase by removing bidders and SINR constraints cannot be violated this way.

To formalize this property, we introduce a family $\calW \subseteq 2^{\calN}$ consisting of sets of bidders. Each $W \in \calW$ represents a subset of bidders that are assigned a license in a particular feasible solution. That is, if we consider a fixed allocation, each $i \in W$ obtains a license, whereas no $i \not\in W$ gets a license. If a scenario has a packing structure, the family $\calW$ is \emph{downward closed}: For any $W \in \calW$ all subsets are contained in $\calW$ as well, i.e., if $W \in \calW$ and $W' \subseteq W$, then also $W' \in \calW$.

For the single-hop scenario considered in Section~\ref{sec:SINR}, $\calW$ contains all sets $W \subseteq \calN$ of links that can be allocated to the channels in $\calC$ such that there is a power assignment fulfilling all SINR constraints. In case of the secondary network setting, it contains all sets $W \subseteq \calN$ for which there is an allocation to channels guaranteeing an interference-free selection of paths.

\begin{algorithm}[t]
For each bidder $i$: If there is no allocation that yields $i$ as winner, remove $i$ from $\calN$\;
Set $\SECPRICE$ to $1$ with probability $\varepsilon$, otherwise to $0$\;
\eIf{$\SECPRICE = 1$}{
Let $i^\ast$ be a bidder such that $b_{i^\ast} = \max_{i \in \calN} b_i$\;
Compute the allocation in which only $i^\ast$ is a winner\;
Charge $p_{i^\ast}(b) := \max_{i \in \calN \setminus \{ i^\ast\} } b_i$\;
All other bidders $i$ are charged $p_i(b) := 0$\;
}{
\ForAll{$i \in \calN$}{add $i$ to $\STAT$ with probability $\varepsilon$, otherwise to $\FIXED$}
Set $B := \max_{i \in \STAT} b_i$\;
Choose $X$ uniformly at random from $\{0, 1, \ldots, \lceil \log n \rceil + 1\}$\;
Set $p := 2^{-X} \cdot B$\;
Set $M := \{ i \in \FIXED \mid b_i \geq p\}$\;
Run \texttt{UnweightedPacking}$(M)$ which returns allocation $\calA$\;
If $i$ contained in $\calA$, charge $p_i(b) := p$, otherwise $p_i(b) := 0$\;
}
\caption{Universally-truthful framework}
\label{alg:universalframeworkpacking}
\end{algorithm}

Based on this property, we can generalize the framework introduced in the previous section as shown in Algorithm~\ref{alg:universalframeworkpacking}. We first adjust the set of bidders $\calN$ to remove all bidders that can never be part of a feasible allocation. Because $\calW$ is downward closed, it suffices to check for each single $i \in \calN$ if there is a feasible allocation with $W = \{i\}$. This step is unnecessary for single-hop scheduling with power contorl in the previous section, as sufficiently large powers guarantee that every single bidder could become successful if she is the only one assigned to transmit. 

In the main routine, our mechanism relies on an algorithm \texttt{UnweightedPacking}$(M)$. Given at set $M$ of candidate bidders, it calculates a feasible allocation $\calA$, in which only the bidders in $W$ are served. Here algorithm \texttt{UnweightedPacking} again neglects the bids. 

\begin{theorem}
Algorithm~\ref{alg:universalframeworkpacking} is universally truthful.
\end{theorem}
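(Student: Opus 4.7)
The plan is to follow the template of the proof of Theorem~\ref{theorem:truthfulsinr}: exhibit the randomized mechanism as a distribution over deterministic truthful mechanisms by fixing all random choices, and argue truthfulness of each resulting deterministic mechanism via a direct case analysis. Before doing this, I would observe that the preprocessing step, which removes every $i \in \calN$ such that no feasible allocation has $i$ as a winner, depends only on the family $\calW$ and not on any bid. A removed bidder has utility $0$ regardless of what she reports, so it suffices to verify truthfulness for the remaining bidders.

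Now fix the outcome of $\SECPRICE$ and, in the branch $\SECPRICE = 0$, fix also the random partition of $\calN$ into $\STAT \cup \FIXED$ and the value of $X$ (and hence the price $p$). In the branch $\SECPRICE = 1$ the mechanism is just the classical single-item Vickrey auction, which is truthful. In the branch $\SECPRICE = 0$, a bidder $i \in \STAT$ is never allocated and never charged, so any report is optimal. The only substantive case is $i \in \FIXED$. Because $B = \max_{j \in \STAT} b_j$ and $X$ are determined independently of $b_i$, the price $p$ does not depend on $b_i$; the bid therefore influences the outcome only through membership of $i$ in $M = \{ j \in \FIXED \mid b_j \ge p\}$, and, once $M$ is fixed, \texttt{UnweightedPacking}$(M)$ and the payments are fully determined.

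It remains to compare the truthful bid $b_i = w_i$ to an alternative bid $b_i'$. If $w_i$ and $b_i'$ lie on the same side of $p$, then $M$, the allocation, and the payments are unchanged, so the utility is the same. If $w_i < p \le b_i'$, then misreporting can only move $i$ into $M$ and possibly cause $i$ to be served at price $p > w_i$, yielding strictly negative utility, while truthful reporting yields $0$. If $b_i' < p \le w_i$, then misreporting removes $i$ from $M$ and yields utility $0$, while truthful reporting yields either $0$ or $w_i - p \ge 0$. In every case, truthful reporting is at least as good. The argument is essentially identical to that of Theorem~\ref{theorem:truthfulsinr}; the only thing to verify for the generalization is that the output of \texttt{UnweightedPacking}$(M)$ is a deterministic function of $M$ alone, which it is by construction, and that the preprocessing step is bid-independent, which follows from the fact that it queries only $\calW$. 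I do not expect any genuine obstacle beyond carefully keeping track of these independence properties.
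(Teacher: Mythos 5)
Your proposal is correct and follows exactly the route the paper takes: the paper's "proof" of this theorem is literally the one-line remark that the argument of Theorem~\ref{theorem:truthfulsinr} applies verbatim, and your case analysis is that argument spelled out. Your additional observation that the preprocessing step is bid-independent (it queries only the feasibility family $\calW$) is a correct and worthwhile detail that the paper leaves implicit.
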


Proving truthfulness can be done by literally the same arguments as used in the proof of Theorem~\ref{theorem:truthfulsinr}. Also, we can again show an approximation guarantee based on the applied algorithm \texttt{UnweightedPacking}$(M)$. In the following, we assume that this algorithm is a $\psi$-approximation, meaning that for the winning set $W$, we have $\lvert W \rvert \geq \psi \cdot \max_{W' \in \calW \cap 2^M} \lvert W' \rvert$. 
Given this guarantee, we can show the following approximation factor for the complete mechanism.

\begin{theorem}
\label{theorem:packingApx}
Let $\calA$ be the allocation returned by the mechanism and $\calA^\ast$ be the allocation optimizing social welfare. Then we have
\[
\Ex{b(\calA)} = \Omega\left( \frac{\varepsilon \psi}{\log n}\right) \cdot b(\calA^\ast) \enspace.
\]
\end{theorem}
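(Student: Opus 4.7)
The plan is to follow the proof template of Theorem~\ref{theorem:sinrApx} almost verbatim, with the constant $\psi$ taking the role of $\kappa$ and with the one structural ingredient that is genuinely new being the downward-closedness of $\calW$. As before, I would split into the two cases $b(\calA^\ast) < 8 B^\ast$ and $b(\calA^\ast) \geq 8 B^\ast$, where $B^\ast = \max_{i \in \calN} b_i$. In the first case, conditioning on $\SECPRICE = 1$ (probability $\varepsilon$) the mechanism runs the single-item second-price auction and the winner $i^\ast$ contributes welfare $B^\ast$, so $\Ex{b(\calA)} \geq \varepsilon B^\ast \geq \tfrac{\varepsilon}{8} b(\calA^\ast)$.

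In the second case I would condition on $\SECPRICE = 0$ and $i^\ast \in \STAT$, which happens with probability $(1-\varepsilon)\varepsilon$. Fix $j \in \{0,\ldots,\lceil \log n\rceil + 1\}$ and let $S_j$ be the winner set returned by \texttt{UnweightedPacking} when $X = j$, $M_j = \{i \in \FIXED : b_i \geq 2^{-j} B^\ast\}$, and $S_j^\ast = \{i \in S^\ast : b_i \geq 2^{-j} B^\ast\}$. The crucial step is to observe that by downward-closedness of $\calW$ we have $S_j^\ast \cap \FIXED \in \calW$, and it is a subset of $M_j$; hence the $\psi$-approximation assumption on \texttt{UnweightedPacking} yields $|S_j| \geq \psi \cdot |S_j^\ast \cap \FIXED|$. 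Taking expectations over the $\STAT$/$\FIXED$ split gives $\Ex{|S_j^\ast \cap \FIXED| \mid i^\ast \in \STAT} \geq (1-\varepsilon)(|S_j^\ast| - 1)$, exactly as in the previous theorem.

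From here the analysis is purely the same calculation: since revenue lower-bounds welfare, the expected welfare (conditioned on the good branch) is at least $\tfrac{1}{\lceil \log n\rceil + 2} \sum_j 2^{-j} B^\ast \cdot \psi (1-\varepsilon)(|S_j^\ast| - 1)$. The geometric decomposition of $b(\calA^\ast)$ — noting that bidders with bid below $2^{-\lceil \log n\rceil-1} B^\ast$ contribute at most $n \cdot 2^{-\lceil \log n\rceil-1} B^\ast \leq b(\calA^\ast)/2$ since $b(\calA^\ast) \geq B^\ast$ — gives $\sum_j 2^{-j} B^\ast |S_j^\ast| \geq b(\calA^\ast)/4$, so after subtracting the $-1$ terms (bounded by $2 B^\ast \leq b(\calA^\ast)/4$) the bracket is at least $b(\calA^\ast)/8$. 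Multiplying by $(1-\varepsilon)\varepsilon$ for the conditioning event yields $\Ex{b(\calA)} = \Omega(\varepsilon \psi / \log n) \cdot b(\calA^\ast)$.

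The only step that is really different from the SINR proof is the step $|S_j| \geq \psi |S_j^\ast \cap \FIXED|$: in Section~\ref{sec:SINR} this followed from the specific power-control analysis of Algorithm~\ref{alg:packingwithpc}, whereas here it is precisely what the downward-closedness of $\calW$ buys us — the subset $S_j^\ast \cap \FIXED$ of an optimal feasible winner set is itself a feasible winner set contained in $M_j$, so it is a valid benchmark against which the approximation guarantee of \texttt{UnweightedPacking} applies. Everything else is routine bookkeeping identical to Theorem~\ref{theorem:sinrApx}.
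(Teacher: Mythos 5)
Your proposal matches the paper's own appendix proof essentially line for line: the same two-case split on $b(\calA^\ast)$ versus $8B^\ast$, the same conditioning on $\SECPRICE=0$ and $i^\ast\in\STAT$, and the same identification of downward-closedness as the reason $S_j^\ast\cap\FIXED$ is itself a feasible winner set inside $M_j$ and hence a valid benchmark for the $\psi$-approximation guarantee. (You also inherit the paper's harmless constant slip: $b(\calA^\ast)/4 - 2B^\ast \geq b(\calA^\ast)/8$ actually requires $b(\calA^\ast)\geq 16B^\ast$ rather than $8B^\ast$, but adjusting the case threshold fixes this and the asymptotic claim is unaffected.)
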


The proof of this theorem is almost identical to the proof of Theorem~\ref{theorem:sinrApx} (with winner sets $W$ taking the role of sets $S$ of scheduled users) and is presented for completeness in the appendix.

\section{Applications and Results}

In the following, we discuss some applications of our approach, for which we obtain universally truthful mechanisms with non-trivial worst-case guarantees using our general technique.

\paragraph{Power Control}
As seen above, we can design universally truthful mechanisms for single-hop request scheduling in the SINR model with power control that yield approximation factors of $O(\log n)$. If we cannot use power control and must resort to fixed power schemes, there might exist request that do not overcome noise with their fixed power assignment. Here we can use the general framework and implement the allocation algorithm \texttt{UnweightedPacking}$(M)$ by, e.g., adapting the algorithms of Goussevskaia et al.~\cite{Goussevskaia09}, Halld\'orsson and Mitra~\cite{HalldorssonM11}, or Halld\'orsson et al.~\cite{HalldorssonHMW13}. While these algorithms have been designed to maximize the number of feasible requests on a single channel, it is rather easy to observe that they maintain their approximation guarantees even if we can assign requests to more than one channel (for completeness, see Theorem~\ref{theorem:extend} in the Appendix). This way, we obtain universally truthful mechanisms for single-hop scheduling with, e.g.,  approximation factors of $O(\log n)$ for uniform, linear, and square-root power assignments.

\paragraph{Interference Models} 
Instead of SINR interference, we can also use any graph-based interference models, such as the protocol model or the disk-graph model. For the two latter models, there exist constant-factor approximation algorithms to maximize the number of feasible requests on a single channel. For the disc-graph model a 5-approximation algorithm is folklore. For the protocol model, one can construct conflict graphs with constant inductive independence number~\cite{Wan09,HoeferKV11}, using which a simple greedy algorithm constructs a constant-factor approximation for capacity maximization~\cite{AkcogluADK02}. As we prove in Theorem~\ref{theorem:extend} in the Appendix, we can generically extend these algorithms to constant-factor approximation algorithms for capacity maximization with multiple channels. Using these algorithms for \texttt{UnweightedPacking}$(M,\calC)$, we obtain universally truthful mechanisms for single-hop scheduling in disc graph or protocol models with approximation factors of $O(\log n)$.

\paragraph{Secondary Networks}
In the secondary network model above, we can use algorithms described by Zhu et al~\cite{Zhu12} as a subroutine for \texttt{UnweightedPacking}$(M)$ in our general framework. In this way, we obtain a universally truthful mechanism with approximation factor $O(g(L_{\max},\Delta) \cdot \log n)$. For a description of the factor $g(L_{\max},\Delta)$, see~\cite{Zhu12}.
%

\paragraph{General Packing Problems}
More generally, \emph{every} allocation problem in (wireless) networks that has a packing structure and allows a non-trivial approximation algorithm can be turned into a universally truthful mechanism using our approach, while spending only an additional $O(\log n)$ factor in the guarantee. This includes, e.g., a large variety of routing, scheduling, congestion, and assignment problems in networks.

\section{Conclusion}
\label{sec:conclude}
In this paper, we have presented a technique to design universally truthful secondary spectrum auctions that extends the literature on existing spectrum auctions in several ways. Furthermore, it allows to turn arbitrary approximation algorithms into truthful mechanisms in a large variety of previously studied models. To apply our technique to further settings, one only needs to adapt the respective allocation algorithm \texttt{UnweightedPacking}$(M)$, which does not have to fulfill any further requirements. In any case, we guarantee that bidders cannot benefit from misreporting their valuation, even if they are risk-averse or risk-seeking.

An interesting topic for future work is to generalize the approach to multi-parameter mechanism design. In terms of secondary spectrum auctions, this means, for example, that each user's valuation depends on the amount of interference that it is exposed to, rather than only considering whether interference is too high or not.

\bibliographystyle{plain}
\bibliography{../../../Bibfiles/literature,../../../Bibfiles/martin}

\clearpage

\appendix
\section{Appendix}
\subsection{Missing Proofs}
\begin{proof}(of Theorem~\ref{theorem:packingApx})
We denote the highest bid by $B^\ast = \max_{i \in \calN} b_i$. Let $W$ denote the set of winning bidders in the computed allocation. Similarly, let $W^\ast$ to denote the set of winning bidders in allocation that maximizes social welfare. 

\textbf{Case 1: $\boldsymbol{b(\calA^\ast) < 8 \cdot B^\ast}$} To bound the social welfare in this case, we only consider the event that $\SECPRICE = 1$. This happens with probability $\varepsilon$. We know that, provided that $\SECPRICE = 1$, the social welfare of the computed solution is precisely $B^\ast$. That is, the expected social welfare is at least 
\[
\Ex{b(\calA)} \ge \varepsilon \cdot B^\ast \geq \frac{\varepsilon}{8} \cdot b(\calA^\ast) \enspace.
\]
This completes the proof for the first case.

\textbf{Case 2: $\boldsymbol{b(\calA^\ast) \geq 8 \cdot B^\ast}$} Let $i^\ast$ be a bidder with the highest bid $b_{i^\ast} = B^\ast$. To bound the expected revenue, we only consider cases, in which $\SECPRICE = 0$ and $i^\ast \in \STAT$.

Under the assumption that $i^\ast \in \STAT$, we denote by $W_j$ the set of winning bidders when $X = j$. That is, $W_j$ is the set of bidders allocated by the packing algorithm when executed on the set $M_j = \{ i \in \FIXED \mid b_i \geq 2^{-j} \cdot B^\ast\}$. Using $p = 2^{-j} \cdot B^\ast$, the expected revenue of the mechanism can be bounded by
\[
\frac{1}{\lceil \log n \rceil + 2} \cdot \Ex{\sum_{j=0}^{\lceil \log n \rceil + 1} 2^{-j} \cdot B^\ast \cdot \lvert W_j \rvert \growingmid i^\ast \in \STAT} \enspace.
\]
By linearity of expectation, this is equal to
\begin{equation}
\frac{1}{\lceil \log n \rceil + 2} \sum_{j=0}^{\lceil \log n \rceil + 1} 2^{-j} \cdot B^\ast \cdot \Ex{\lvert W_j \rvert \growingmid i^\ast \in \STAT} \enspace.
\end{equation}

Furthermore, let us define $W_j^{\ast} = \{ i \in W^{\ast} \mid b_i \geq 2^{-j} \cdot B^\ast \}$ as the set of winning bidders in the optimum and having a bid at least $2^{-j} \cdot B^\ast$. Obviously, $W_j^{\ast} \cap \FIXED$ is a subset of $M_j$. As $\calW$ is downward closed, it is also contained in $\calW$. Hence, using our assumption on the quality of algorithm {\tt UnweightedPacking}, we have $\lvert W_j \rvert \geq \psi \cdot \lvert W_j^{\ast} \cap \FIXED \rvert$. Furthermore, we can bound the expected size of $W_j^\ast \cap \FIXED$ by
\begin{align*}
& \Ex{\lvert W_j^\ast \cap \FIXED \rvert \growingmid i^\ast \in \STAT} \\
& = \Ex{\lvert W_j^\ast \cap \FIXED \setminus \{i^\ast\} \rvert} \\
& = (1 - \varepsilon) \lvert W_j^\ast \setminus \{i^\ast\} \rvert \\
& \geq (1 - \varepsilon)(\lvert W_j^\ast \rvert - 1) \enspace.
\end{align*}

The value of this is at least
\begin{align*}
& \frac{1}{\lceil \log n \rceil + 2} \sum_{j=0}^{\lceil \log n \rceil + 1} 2^{-j} \cdot B^\ast \cdot \psi \cdot (1 - \varepsilon)(\lvert W_j^\ast \rvert - 1) \\
& \geq \frac{(1 - \varepsilon) \psi}{\lceil \log n \rceil + 2} \left( \left( \sum_{j=0}^{\lceil \log n \rceil + 1} 2^{-j} \cdot B^\ast \cdot \lvert W_j^\ast \rvert \right) - 2 B^\ast \right) \enspace.
\end{align*}

To derive the approximation factor now, we have to bound $b(\calA^\ast)$ in terms of the cardinalities of the $W_j^\ast$ sets. As a first step, we consider bidders that are not contained in any of these sets. For this purpose, let us define $Y = \calN \setminus S_{\lceil \log n \rceil + 1}$. The contribution of these bidders to the social welfare is at most
\[
\sum_{i \in Y} b_i \leq n \cdot 2^{-\lceil \log n \rceil -1} \cdot B^\ast \leq \frac{b(\calA^\ast)}{2} \enspace,
\]
because $b(\calA^\ast) \geq B^\ast$. This yields for the remaining links
\[
\sum_{j = 0}^{\lceil \log n \rceil + 1} 2^{-j} \cdot B^\ast \cdot \lvert W_j \rvert \geq \frac{b(\calA^\ast)}{4} \enspace.
\]
Combining this bound with the previously obtained bound on revenue, we get that the expected revenue is at least
\[
\frac{(1 - \varepsilon) \psi}{\lceil \log n \rceil + 2} \left(\frac{b(\calA^\ast)}{4} - 2 B^\ast \right) \geq \frac{(1 - \varepsilon) \psi}{\lceil \log n \rceil + 2} \cdot \frac{b(\calA^\ast)}{8} \enspace.
\]
For small $\varepsilon$, this shows that the revenue is in $\Omega\left(\frac{\varepsilon \cdot \psi}{\log n}\right) \cdot b(\calA^\ast)$. As no bidder has negative utility, this shows the same bound for the social welfare. 

Finally, $\SECPRICE = 0$ and $i^\ast \in \STAT$ happens with probability $(1-\varepsilon)\cdot\varepsilon$.
In all other cases, we underestimate the welfare of the solution computed by the mechanism by 0. For the expected social welfare obtained by our algorithm this yields
\[
\Ex{b(\calA)} \ge \frac{(1-\varepsilon)^2 \cdot \varepsilon \cdot \psi}{(\lceil \log n \rceil + 2)\cdot 8} \cdot b(\calA^\ast) \enspace.
\]
This proves the claim for case 2 and yields the theorem.
\end{proof}

\subsection{Extending Capacity Maximization Algorithms to Multiple Channels}
In order to apply algorithms that are meant to solve the single-channel allocation problem in the context of multiple channels, we can use the following reduction: We iteratively fill each of the sets $\calL_1, \ldots, \calL_k$ by applying the single-channel algorithm on all remaining users. We can show that we only lose a constant-factor in terms of the approximation guarantee.

\begin{theorem}
\label{theorem:extend}
Given an algorithm with approximation factor of $\psi$ for the unweighted single-channel allocation problem, we obtain a $((1 - \nicefrac{1}{\e}) \psi)$-approximation algorithm for the unweighted $k$-channel problem.
\end{theorem}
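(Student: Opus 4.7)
The plan is to use the standard greedy-maximum-coverage argument, adapted to a $\psi$-approximate oracle for the single-channel subproblem. Let $\OPT = (\calL_1^\ast, \ldots, \calL_k^\ast)$ denote an optimal $k$-channel feasible allocation and set $m^\ast = \sum_{j=1}^k |\calL_j^\ast|$. Let $U_j$ denote the set of users assigned by the iterative algorithm after filling channels $1,\ldots,j$, so $U_0 = \emptyset$ and the output has size $|U_k|$. The goal is to show $|U_k| \ge (1 - 1/\e)\psi \cdot m^\ast$.

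The key structural observation is that at the start of iteration $j+1$, the set of users still available includes all of $\OPT \setminus U_j$, and the $k$ sets $\calL_1^\ast \setminus U_j, \ldots, \calL_k^\ast \setminus U_j$ remain feasible single-channel assignments (feasibility is closed under taking subsets in all the interference models we care about, i.e., the family $\calW$ from Section~\ref{sec:general} is downward closed). By averaging, at least one of these sets has size at least $(m^\ast - |U_j|)/k$, so the optimum of the single-channel problem restricted to the remaining users is at least $(m^\ast - |U_j|)/k$. Applying the $\psi$-approximation oracle, the new channel contributes at least $\psi(m^\ast - |U_j|)/k$ new users, yielding
\[
|U_{j+1}| - |U_j| \;\ge\; \frac{\psi}{k}\bigl(m^\ast - |U_j|\bigr).
\]

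Setting $a_j := m^\ast - |U_j|$, this rewrites as $a_{j+1} \le a_j\bigl(1 - \psi/k\bigr)$, and unrolling gives $a_k \le m^\ast (1 - \psi/k)^k \le m^\ast \e^{-\psi}$. Hence $|U_k| \ge (1 - \e^{-\psi})\, m^\ast$. To finish, I would use concavity of $f(\psi) = 1 - \e^{-\psi}$ on $[0,1]$: since $f(0) = 0$ and $f(1) = 1 - 1/\e$, the chord from $(0,0)$ to $(1, 1 - 1/\e)$ lies below $f$, giving $1 - \e^{-\psi} \ge (1 - 1/\e)\psi$ for every $\psi \in (0,1]$. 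Combining yields the claimed $|U_k| \ge (1 - 1/\e)\psi \cdot m^\ast$.

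The argument is essentially routine; the only subtle points are (i) verifying that the oracle's approximation guarantee is against the \emph{restricted} optimum (which is legitimate because the oracle is run on the remaining-user instance), and (ii) the downward-closedness step that lets us conclude $\calL_j^\ast \setminus U_i$ is feasible. Neither is a real obstacle in our setting. The concave-versus-chord conversion from $1 - \e^{-\psi}$ to $(1 - 1/\e)\psi$ is the only spot where one has to be slightly careful about what is being claimed, but it is a one-line calculus observation.
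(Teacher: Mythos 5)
Your proof is correct and follows essentially the same route as the paper: iteratively fill channels, use averaging over the $k$ optimal channel classes plus downward closedness to get the recursion $|U_{j+1}|-|U_j|\ge \frac{\psi}{k}(m^\ast-|U_j|)$, and unroll it to $1-(1-\psi/k)^k\ge 1-\e^{-\psi}\ge(1-\nicefrac{1}{\e})\psi$. Your deficit formulation $a_{j+1}\le(1-\psi/k)a_j$ is just the paper's recursion rearranged, and your explicit concavity step at the end is the correct way to land on the stated constant (the paper's own write-up garbles the geometric-series evaluation at that point but clearly intends the same bound).
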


\begin{proof}
Let $\calL_1, \ldots, \calL_k$ be the set of users selected for the respective channels by the algorithm. Let $\OPT$ be the total number of selected users in the optimal solution.

Let us consider the $j$th run of the algorithm, which selects the users for set $\calL_j$. We know that at this stage at most $\sum_{i=1}^{j-1} \lvert \calL_i \rvert$ have been selected in previous rounds. This means that at least $\OPT - \sum_{i=1}^{j-1} \lvert \calL_i \rvert$ users that are selected in the optimal solution are still unselected. In the optimal allocation these are allocated to at most $k$ different channels, meaning that at least $\frac{\OPT - \sum_{i=1}^{j-1} \lvert \calL_i \rvert}{k}$ can share a channel. By using the approximation guarantee of the single-channel algorithm, we get
\[
\lvert \calL_j \rvert \geq \psi \frac{\OPT - \sum_{i=1}^{j-1} \lvert \calL_i \rvert}{k} \enspace.
\]
Adding $\sum_{i=1}^{j-1} \lvert \calL_i \rvert$ to both sides, we get
\begin{align*}
\sum_{i=1}^j \lvert \calL_i \rvert & \geq \sum_{i=1}^{j-1} \lvert \calL_i \rvert + \psi \frac{\OPT - \sum_{i=1}^{j-1} \lvert \calL_i \rvert}{k} \\
& = \left( 1 - \frac{\psi}{k} \right) \sum_{i=1}^{j-1} \lvert \calL_i \rvert + \frac{\psi}{k} \OPT \enspace.
\end{align*}
Solving this recursion, this yields
\begin{align*}
\sum_{i=1}^k \lvert \calL_i \rvert & \geq \sum_{i=0}^{k-1} \left( 1 - \frac{\psi}{k} \right)^i \cdot \frac{\psi}{k} \OPT \\ 
& = \frac{\left( 1 - \frac{\psi}{k} \right)^k}{1 - \left( 1 - \frac{\psi}{k} \right)} \cdot \frac{\psi}{k} \OPT \\
& = \left( 1 - \frac{\psi}{k} \right)^k \OPT \\
& \geq \left( 1 - \frac{1}{\e} \right) \psi \OPT \enspace.
\end{align*}
This shows the claim.
\end{proof}

\end{document}